\title{Confidence regions for univariate and multivariate data using permutation tests}
\author{Niels Lundtorp Olsen{$^\dagger$} \\
	Department of Applied Mathematics and Computer Science \\
Technical University of Denmark}
\date{Jun 2022}
\newcommand{\eb}{\Leftrightarrow}
\newcommand{\pil}{\rightarrow}
\newcommand{\M}[1]{\mathbb{#1}}
\newcommand{\R}{\mathbb{R}}
\newcommand{\prikt}[1]{#1 \dots #1}
\newcommand{\E}{\mathrm{E}}
\DeclareMathOperator*\uaf{\perp\mskip-11mu\perp}
\newcommand{\brak}[1]{\langle #1 \rangle}
\newtheorem{prop}{Proposition}
\newtheorem{lemma}{Lemma}
\newtheorem{eks}{Example}
\theoremstyle{definition}
\newtheorem{defi}{Definition}
\begin{document}
	
\maketitle

\begin{abstract}
	Confidence intervals are central to statistical inference as a tool to evaluate the type I error risk at a given significance level. 
	We devise a method to construct non-parametric confidence intervals using a single run of a permutation test. This methodology is extended to a multivariate setting, where we are able to handle multiple testing under arbitrary dependence. We demonstrate the method on a weather data set and in a simulation example.
\end{abstract}

\medskip

\noindent%
{\it Keywords:} confidence intervals, permutation tests, multiple testing, non-parametric inference
\medskip

{$\dagger$: nalo@dtu.dk}

\section{Background}
There is a well-known duality between confidence intervals and tests: let $\theta$ be a quantity of interest to be estimated -- if $ \theta_0 \notin CI_\alpha(X)$ on a given significance level $\alpha$, then  ($H_0: \theta = \theta_0$) is rejected, and this has probability $\alpha$ under $H_0$ (at least ideally).
The statistical inference usually goes from having a confidence interval to rejecting/accepting hypotheses, but the other way is also possible (yet rarely done).

There exists a vast literature on hypothesis testing, partly arising from the fact that closed-form solutions are generally not available outside of the linear normal model.  

Confidence intervals are often constructed using asymptotical properties of estimators.
This usually amounts to $\hat{\theta} \pm 1.96 \cdot \hat{\sigma}_{\theta}$, where $\hat{\theta}$ and $\hat{\sigma}_{\theta}$ are the estimate and estimated standard error, respectively.
 However, this approximation becomes increasingly problematic for small sample sizes.

An alternative to parametric models is to use  non-parametric tools, for which the most versatile tool is \emph{permutation testing} (Section \ref{perm-test-sect}). Permutation tests are broadly applicable and require only few assumptions. Permutation tests work well for high-dimensional data and do not require assumptions on the dependence structure.

\paragraph{Multiple testing}
When considering several parameters or hypotheses, multiple testing becomes an issue. Many methods and error quantities have been proposed, we here focus on the \emph{family-wise error rate} (FWER), ie the chance of committing at least one type I error. In terms of multiple confidence intervals, this translates into $\theta$ not belonging to the cartesian product of the marginal confidence intervals. Whereas a large literature exists for tests (and multiple testing) for high-dimensional data, these methods do not straightforwardly convert into confidence intervals.  

Having multiple tests increases the chances of a type I error. There are two closely related issues:
\begin{enumerate}
	\item When having a set of multiple confidence intervals, what is the joint confidence level (ie. the confidence level of the cartesian product)?
	\item How do we construct (or adjust) confidence intervals, such that their joint confidence level is $1-\alpha$, for a given $\alpha$?
\end{enumerate}

The oldest correction method for multiple testing is the \emph{Bonferroni correction}, presented for confidence intervals by \cite {dunn1961}.
The Bonferroni inequality says that if each of $K$ statistical tests/confidence intervals has a type I error chance at most $\alpha$, then the joint statistical test/confidence region has a FWER at most $K \alpha$. Conversely, if we construct an $(1 - \frac{\alpha}{K})$ confidence interval for each parameter, the joint confidence level is at least $1 - \alpha$.

The Bonferroni correction represents the extreme case of type I errors never happening concurrently; another relevant case is independence of the type I errors. Under this assumption, the joint  confidence level of $K$ $(1 - \alpha)$ confidence intervals is $(1 - \alpha)^K$. This is known as the \emph{Sidak correction}. Sidak showed that this adjustment remained valid for arbitrary dependences in the multivariate normal distributions, when constructing confidence intervals for the means \cite{sidak1967}.

A crucial issue is that of \emph{dependence} between the hypothesis tests. If two variables are positively correlated, then the chances of a type I error is also positively correlated (at least when using common methods). This implies that p-values and confidence intervals need less adjustment compared to the independence case.

\begin{figure}[!htb]
	\includegraphics[width=0.6\textwidth]{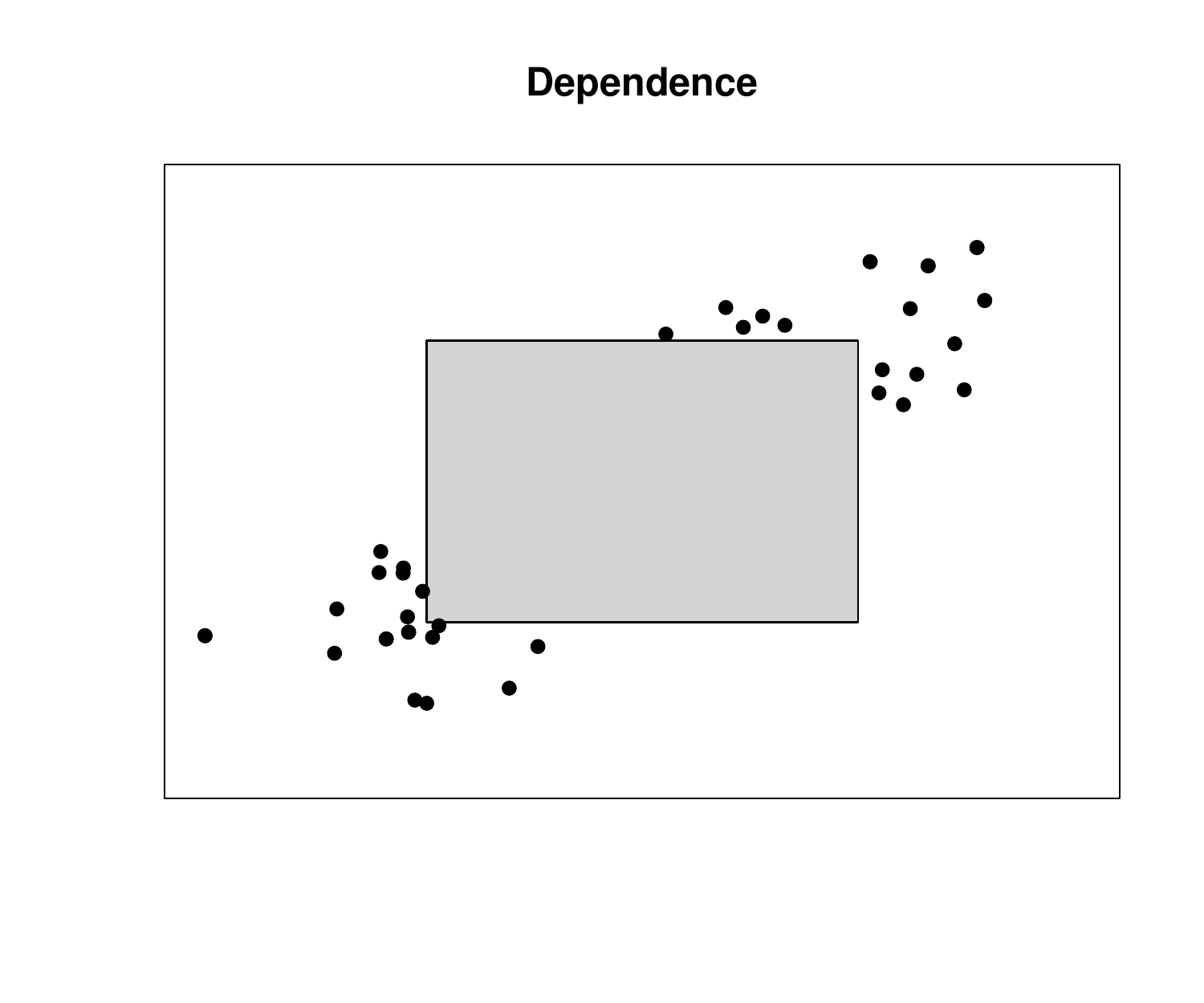}
	\includegraphics[width=0.6\textwidth]{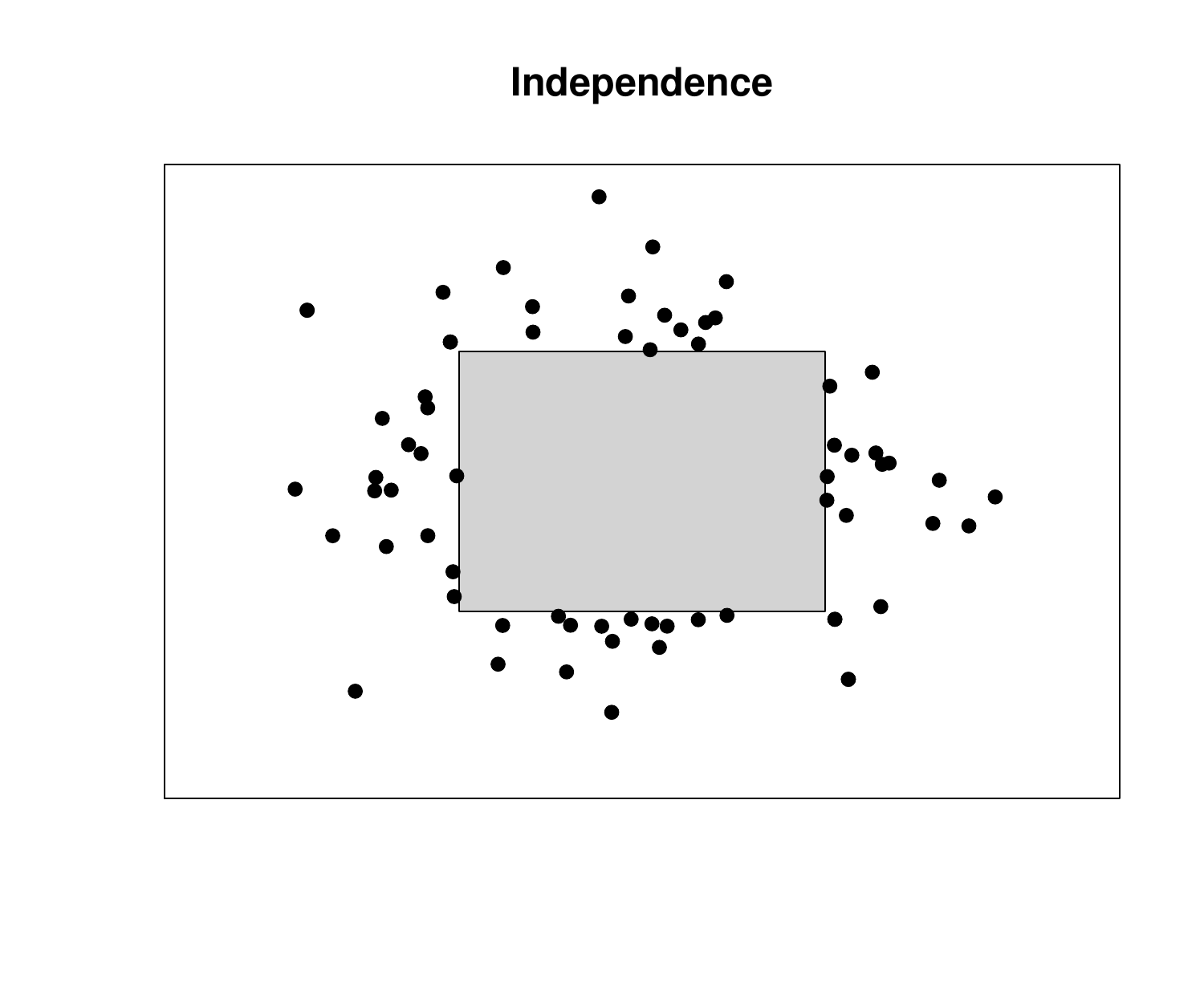}
	\includegraphics[width=0.6\textwidth]{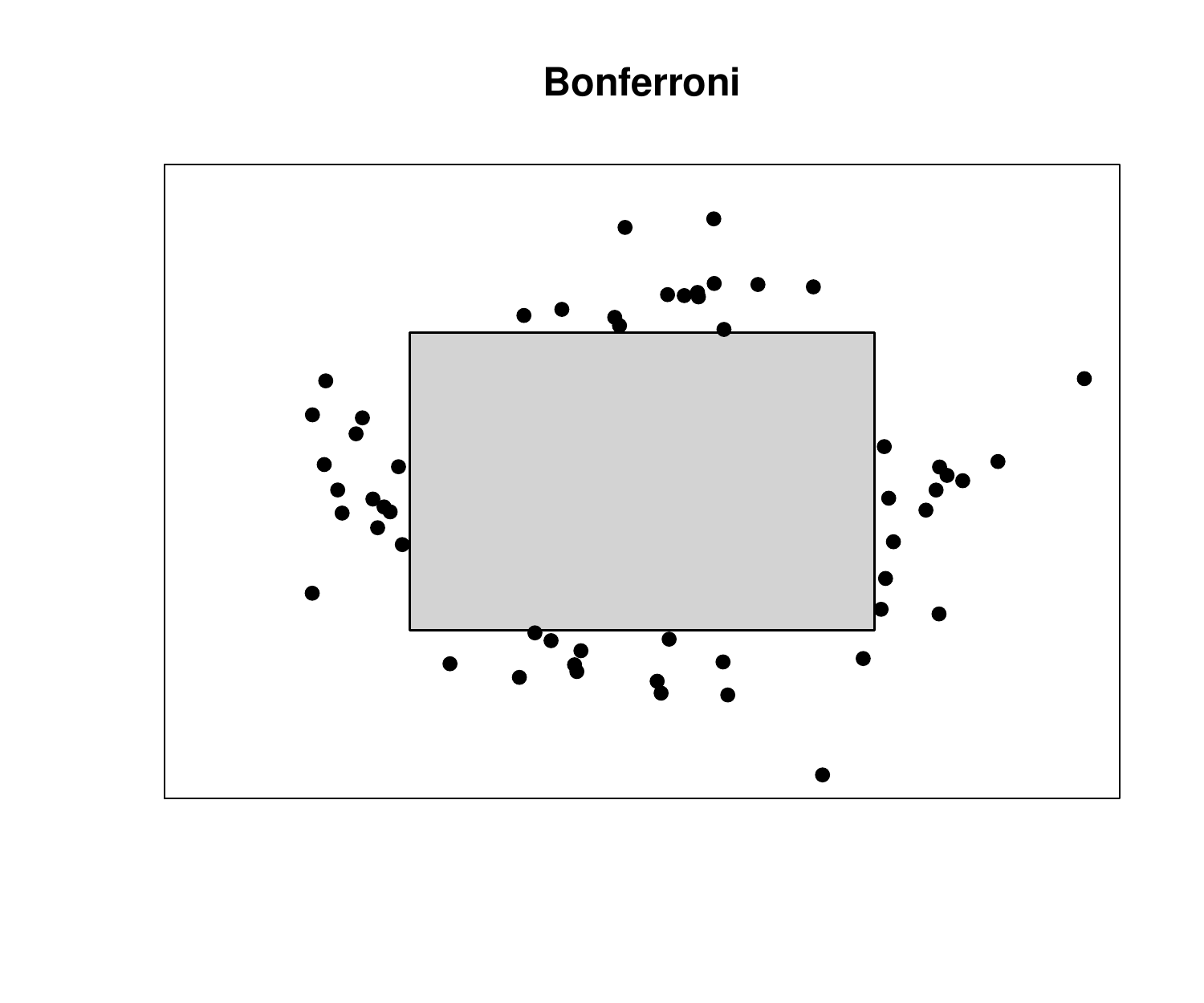}
	\caption{The family-wise error rate in two dimensions. The shaded rectangle represent the product of two $(1-\alpha)$ confidence intervals. The dots represent type I errors in three scenarios. Upper: positive dependence. Middle: independence. Bottom: mutually exclusive.} \label{figtwodim}
\end{figure}

We have illustrated this for the two-dimensional case in Figure \ref{figtwodim}. Here, the shaded region represents a confidence region for $\mu_x$ and $\mu_y$, each on level $(1- \alpha)$, and black dots are estimates outside of this, ie. type I errors. 
Thus, if the type I error probability is $\alpha$, the FWER is given by
$$
2\alpha - P(\text{type I error for }\mu_x \text{ and type I error for }\mu_y),
$$
which in the figure are the shaded regions "across the corners".  

In the case of strong positive correlation, a large fraction of type I errors are both for $\mu_x$ and $\mu_y$, so the FWER is much below $2 \alpha$. 
In the independence case, there is small probability of a joint type II error, so the FWER is slightly below $2 \alpha$. 
In the last case, the two type I errors are mutually exclusive, and the FWER is $2 \alpha$. 
\footnote{We note the slight abuse of the term "confidence". However, if the confidence intervals are on the form (or close to) $\hat{\theta} \pm z_\alpha$ for a fixed $z_\alpha$, this description is valid. }

In summary, there is thus much to be gained, if we are able to correctly assess the "effect" of dependence when constructing or adjusting multiple confidence intervals. 
In particular, this would allow us to adjust "less" in the case of strong dependence, {improving} the statistical inference.

\subsection{Permutation tests} \label{perm-test-sect}
For notation, let  $S_N$ denote the symmetric group of order $N$. We shall identify a permutation $s \in S_N$ with its corresponding permutation function $\R^N \pil \R^N$. We shall use $e$ to refer to the identity permutation. \medskip

Permutation tests are a class of non-parametric tests that tests a hypothesis by \emph{permuting} data $X = (X_1 \prikt{,} X_N) \in \mathcal{X}^N$, using an assumption of exchangeabliity under the hypothesis. 
Permutation tests are commonly used to test {comparisons} such as  two-sample comparison, where $X_1 \prikt{,} X_N$ are iid. under the null hypothesis, but not otherwise.  
One crucial advantage of permutation tests is that $X_i$ can be any kind of data, including multivariate data with a complicated (and unknown) dependence structure, allowing an enormous flexibility and wide scope. %

The main drawback of permutation tests is the computational cost involved. However, with the advances in programming tools and parallel computing, this is a minor issue.
A second drawback is that calculating all $N!$ permutations is unfeasible for all but very small $N$. Therefore, permutation tests are commonly implemented using \emph{Conditional Monte Carlo} (CMC), which uses randomly sampled permutations. This method is well-behaved, but introduces randomness to the result (ie. the $p$-value) due to the random sampling. 
We refer to \cite{pesarin} for a general discussion of permutation tests.

\paragraph{The permutation test}

Let $X = (X_1 \prikt{,} X_N  \in \mathcal{X}^N)$ be a stochastic variable generated by some statistical model $\{P_\theta, \theta \in \Theta\}$. We consider a null hypothesis $H_0 \subseteq \Theta$ such that under $H_0$
$$
(X, \dots, X_N) \stackrel{D}{=} s(X_1, \dots, X_N) , \quad s \in S_N
$$


A permutation test consists of a test statistic $T: \mathcal{X}^N \pil \R$ such that large values of $T(X)$ are evidence against $H_0$\footnote{For simplicity, we only consider one-sided test statistics, one can use two-sided test statistics, too}, and where the distribution of $T$ is invariant to permutations of $X$ under $H_0$.

Let $s_1, \dots s_M$ be random permutations from $S_N$. Define $T_\text{obs} = T(X)$ and $T_m = T \circ s_m(X)$, and let $Q_\gamma$ denote the $\gamma$ quantile of a vector; $Q_{\gamma}(a_1, \dots, a_M) = a_{(\lceil M\gamma \rceil)}$. The permutation test goes as follows:

\begin{quotation}
	{\it
		Using significance level $\alpha$, we reject $H_0$ if $T_\text{obs} > Q_\alpha(T_1, \dots, T_M)$.
	}
\end{quotation}
Ideally, one should use all permutations in $S_N$ for constructing $T_1, \dots, T_M$, but this is unfeasible for all but very small $N$. 
The use of random permutations is referred to as \emph{Conditional Monte Carlo} (CMC)

\begin{prop}
Let $\{s_1, \dots , s_M\} = S_N$.
	Under $H_0$	then
	$$
	P(\text{reject }H_0 ) \leq \alpha
	$$
\end{prop}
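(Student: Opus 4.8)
The plan is to derive the bound from the group structure of $S_N$ together with the permutation invariance assumed under $H_0$, not from any asymptotic approximation. Write the rejection rule as $\phi(X) = \mathbf 1\{T(X) > Q(X)\}$, where $Q(X) := Q_\alpha(T_1,\dots,T_M)$ is the critical value computed from the full collection $\{T(s(X)) : s \in S_N\}$. Note that because $e \in S_N$, the observed value $T_{\text{obs}} = T(X)$ is itself one of the $M$ entries of this collection, so the test compares $T_{\text{obs}}$ against a quantile of a set that contains it.

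First I would establish that the critical value is a permutation invariant of the data. For any fixed $g \in S_N$ the map $s \mapsto s \circ g$ is a bijection of $S_N$, so the multisets $\{T(s(g(X))) : s \in S_N\}$ and $\{T(s(X)) : s \in S_N\}$ coincide. Consequently every order statistic of the collection — and hence the quantile $Q$ — is unchanged when $X$ is replaced by $g(X)$, i.e. $Q(g(X)) = Q(X)$ for all $g \in S_N$.

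Next I would average the rejection indicator over the whole group. Using the invariance $Q(g(X)) = Q(X)$,
$$
\frac{1}{M} \sum_{g \in S_N} \phi(g(X)) = \frac{1}{M}\, \#\{ g \in S_N : T(g(X)) > Q(X) \}.
$$
The right-hand side counts how many of the $M$ permuted statistics strictly exceed the critical value; by the construction of the quantile at most $\alpha M$ of them can do so, so this average is at most $\alpha$ for every realization of $X$. The decisive input from the null hypothesis enters now: since $g(X) \stackrel{D}{=} X$ for every fixed $g \in S_N$, the fixed rejection rule satisfies $\E[\phi(g(X))] = \E[\phi(X)]$, and therefore
$$
\E[\phi(X)] = \frac{1}{M}\sum_{g \in S_N} \E[\phi(g(X))] = \E\Big[ \frac{1}{M}\sum_{g \in S_N} \phi(g(X)) \Big] \le \alpha ,
$$
which is the claim.

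The main obstacle I anticipate is the bookkeeping around ties and the ceiling in the definition of $Q_\alpha$. With ties among the values $\{T_m\}$ the number strictly exceeding the threshold can only decrease, so the counting bound "$\le \alpha M$" is preserved; but the same discreteness is what forces the conclusion to be an inequality rather than an equality, and I would make sure the argument nowhere silently assumes the $T_m$ are distinct or that $\alpha M$ is an integer. I would also confirm the convention so that the critical value is the \emph{upper} quantile of the permutation distribution (the value exceeded by a fraction at most $\alpha$ of the $T_m$); this is exactly what makes large $T_{\text{obs}}$ count as evidence against $H_0$ and what the counting step in the third paragraph requires.
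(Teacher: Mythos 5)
Your proof is correct. It is the classical group-averaging (Hoeffding/randomization) argument: the critical value $Q(X)$ is invariant along the orbit $\{g(X) : g \in S_N\}$ because $s \mapsto s \circ g$ is a bijection of $S_N$; the pointwise count $\#\{g : T(g(X)) > Q(X)\}$ is at most $M - \lceil M(1-\alpha)\rceil \leq \alpha M$ regardless of ties; and exchangeability under $H_0$ transfers the orbit average to $\E[\phi(X)]$. Your closing caveat about the quantile convention is well taken: with the paper's definition $Q_\gamma(a_1, \dots, a_M) = a_{(\lceil M\gamma\rceil)}$, the displayed rejection rule ``$T_\text{obs} > Q_\alpha(T_1, \dots, T_M)$'' must be read as the upper quantile $Q_{1-\alpha}$, and the paper's own later usage ($Q_{1-\gamma}(u_1,\dots,u_M)$ in Lemma \ref{lemma-et}, $Q_{1-\alpha}$ in the multivariate proof) confirms this.

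For comparison: the paper states this proposition without proof; its nearest argument is the proof of Proposition \ref{coverage-prop}, which reaches the same kind of bound by a different-looking route. There, one conditions on the invariant $\sigma$-algebra $\bar{\mathcal{F}}(X)$, introduces a uniformly random permutation $S$ independent of $X$, and argues that conditionally on $X$ the randomized statistic $\tilde{T} = t \circ S(\cdot)$ attains each $T_m$ with equal probability (counted with multiplicity), so the conditional rejection probability is at most $\gamma$. This is your averaging step in different clothing --- summing $\phi(g(X))$ over $g \in S_N$ and dividing by $M$ is exactly computing the conditional probability given $\mathcal{F}(X)$ under a uniform random $S$ --- but the two formulations buy different things. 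The paper's version yields the \emph{conditional} guarantee given $\bar{\mathcal{F}}$, which is the stronger statement its coverage results are built on; your version yields the unconditional level bound with minimal machinery, and is arguably the cleaner proof of the proposition as literally stated. No gap in either direction.
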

When $\{s_1, \dots , s_M\}$ are random samples, the above proposition is only an approximation which becomes increasingly good as $M$ goes to infinity. We suggest to use bootstrapping to assess the uncertainty caused by the random sampling (see also Section \ref{sect-uncertain}).

\paragraph{Related research on confidence intervals using permutation tests}
Confidence intervals have been constructed using permutation tests. \cite{pesarin} outlines an algorithm  where hypotheses $H_0: \theta = \theta_0$ are tested on a fine grid, until a threshold $\epsilon$ has been reached \cite[section 3.4]{pesarin}.
The method presented in this paper gives the same result, but uses only a single run of iterations and does not have a grid-related approximation error.

Furthermore, \cite{pesarin} devises a multivariate extension to the univariate algorithm \cite[4.3.5]{pesarin}.
This is an iterative procedure that in practice requires testing on a fine multivariate grid. Additionally, this procedure introduces an implicit ordering of the variables being tested. 
We are not aware of any examples where this algorithm has been applied. 

The multiple testing procedure presented in this paper is different as it directly uses the results from the univariate method and only considers box-shaped confidence regions.

\subsection{Contributions of this paper}
We devise an algorithm for constructing non-parametric confidence intervals using a single set of permutations. This requires only weak assumptions on the test statistic used, and is easily implemented in software. 
Our proposed algorithm is more arguably a p-value correction method, but carries the same aim as textbook confidence intervals: to define a confidence region with a low, pre-defined chance of making a type I error. 
We do not require any parametric assumptions for the statistical model nor rely on asymptotical properties, thus our proposed method is valid in a wide range of scenarios. 

The methodology is extended to the multivariate case under the same assumptions on the test statistic, but arbitrary dependence between coordinates. 
Our proposed method exploits the "dependence effect" of testing via a permutation test by counting instances where there is a family-wise error. Thus in the case of strong dependence, we obtain a much less conservative estimate of the FWER than, say, Sidaks procedure. 
In detail, our multivariate procedure consists of two parts: (1) a calculation of the adjusted confidence level and (2) an adjustment procedure based on said adjusted confidence level. Only box-shaped confidence regions are considered. 

In summary, our contributions are:
\begin{itemize}
	\item A simple and efficient procedure for constructing single-parameter confidence intervals. Furthermore, there are only minimal assumptions on the distribution, and the procedure does not rely on any asymptotics. 
	
	The related method outlined in \cite[section 3.4]{pesarin} also constructs confidence intervals using permutation tests, but does so by testing on a fine grid. Our procedure has the advantage that it only requires a single run of the permutation test. It is thus way faster and has no grid-related approximation error.
	
	\item An estimation/correction procedure for multivariate confidence intervals that can handle and exploit arbitrary dependence structures. This allows for a much less conservative correction, strengthening the statistical inference and conclusion hereof. 
	
	In fact, a high degree of correlation is typical for multivariate data. Various methods exist for parametric models, where one can focus on single parameters. Contrary, non-parametric multivariate methods typically merely use "positive association" (for which uncorrelated data is the border case) and do not take the degree of correlation into account. 

	In the simulation experiment described in Section 3, we varied the correlation from $0.9$ to $0.99$. The associated coverage and  adjusted confidence level changed accordingly. 
\end{itemize}

\section{Methodology}

Our methodology concerns the construction of confidence intervals, and thus we are interested in formulations of the kind $P_\theta(\theta \in I)$,  where $I$ is a an interval and a measurable function of the outcome $X$. Since we are working in the realm of permutation tests, we will at times condition on some $\sigma$-algebra $\M{D} \subseteq \mathcal{F}(X)$, where $\mathcal{F}(X)$ is the $\sigma$-algebra generated by $X$.

Though we might only be able to determine $P_\theta(\theta \in I | \M{D})$ for a non-trivial $\M{D}$, it holds that
$$
P_\theta(\theta \in I) = \E[P_\theta(\theta \in I | \M{D})]
$$
and thus $P_\theta(\theta \in I | \M{D})$ is an unbiased estimate of $P_\theta(\theta \in I) $. Additionally, $P_\theta(\theta \in I | \M{D}) = \alpha$ for some constant $\alpha$ is in fact a \emph{stronger} statement than $P_\theta(\theta \in I) = \alpha$ and is robust to model misspecifications only related to $\M{D}$. 

For permutation sets, one commonly uses the \emph{conditional reference space} which we here for a random variable $X: (\Omega, \M{F})\pil \R^N$ define as the $\sigma$-algebra
$$
\bar{\mathcal{F}}(X) =  \{X^{-1}(A) : A \in \M{B} \cap \M{S} \}
$$
where $\M{B}$ is the Borel-$\sigma$-algebra, and $\M{S}$ is the $\sigma$-algebra of \emph{symmetric sets}: $$S \in \M{S} \eb [x \in S \Rightarrow s(x) \in S \quad \forall s \in S_N, x \in \R^N]$$
The intuition behind $\bar{\mathcal{F}}(X)$ is that information about $X$ is known only up to permutation, e.g. $X = (1,2)$ and $X = (2,1)$ are indistinguishable.

\medskip
\paragraph{Confidence intervals}
As we in this work (formally) are considering confidence by means of a type I error risk (ie. a $p$-value), we shall implicitly assume the confidence interval as part of an ordered family of intervals. This implicit ordering  is satisfied by the common methods  for constructing confidence intervals.
Furthermore, for multivariate parameters, we wish to consider type I errors for different coordinates separately, that is, $\theta_k \notin I_k$ is \emph{specifically} a rejection of the hypothesis $H_0: \theta^0_k = \theta_k$.  
This motivates the quite heavy definition of Definition \ref{coverage-defi-mv}.
 
\begin{defi}[Confidence interval] \label{confint-defi} Let $X \in \mathcal{X}$ be a random variable generated by the statistical model $\{P_{\theta}, \theta \in \R\}$,
	where  $\theta \in \R$ is an unknown parameter of interest. We remark that the statistical model can depend on other unknowns, but these are considered fixed and thus omitted from the model. 
	
 We define a \emph{confidence interval series} (for $\theta$) as a family of intervals $\{I_\alpha = [a_\alpha, b_\alpha] | 1 > \alpha > 0\}$ with the property that each $I_\alpha$ is a measurable function of $X$ and $\alpha_1 < \alpha_2 \Rightarrow I_{\alpha_2} \subseteq I_{\alpha_1}$.
	
A \emph{confidence interval} at (nominal) level ($1- \alpha$) is an interval $I = [a,b]$ with an implicit understanding that $I = I_\alpha$ for some $I_\alpha$ in a confidence interval series.
\end{defi}
For example in the one-sample normal model, we can express the textbook confidence interval for $\theta$ as a confidence interval series by:
\begin{equation}
\{ I_\alpha =  [\hat{\theta} -  t_{1 - \alpha/2} \cdot \tfrac{s}{\sqrt{N}}, \hat{\theta} +  t_{1 - \alpha/2} \cdot \tfrac{s}{\sqrt{N}} ] ,  \alpha \in (0,1) \} \label{ci-eks1}
\end{equation}
 Here the statistical model is $\{N(\theta, \sigma^2)^{\otimes N} | \theta \in \R \}$ with unknowns $\theta$ and $\sigma^2$.

\begin{defi}[coverage and type I risk, univariate] \label{coverage-defi}
	
Continuing  the setting	of definition \ref{confint-defi}, assume that we observe a confidence interval series $\{I_\alpha\} = \{I_\alpha(x) \}$ corresponding to an observation $x \in \mathcal{X}$, and let $I \in \{I_\alpha\}$ be a confidence interval.

	 We define the \emph{type I risk} (or $p$-value) for $\theta$ conditionally on $\M{D}$ as 
\begin{equation}
\inf_{(\alpha: \ \theta \notin I_\alpha)} P_\theta(\theta \notin J_\alpha | \M{D}) 
\label{risk1}
\end{equation}
where $J_\alpha$ is seen as a random variable, 
and define the {type I risk} of $I$ as $\sup\{$type I risk$(\theta) | \theta \notin I\}$. 
We then define the {coverage} of $I$ as $1 -  \text{type I risk}$.
Due to the ordering property of confidence intervals, the infimum in \eqref{risk1} will be attained in the "limit" of $I_\alpha$s not containing $\theta$. 
\end{defi}
For example, it is easily verified that for a given $\alpha$, $I_\alpha$ from \eqref{ci-eks1} has coverage $1 - \alpha$. However, were we to choose $\M{D} = \mathcal{F}(\hat{\theta}, s)$, ie. the sufficient statistic, then $P_\theta(\theta \notin J_\alpha | \M{D}) \in \{0,1\}$, which is sort of meaningless from an inference perspective. \medskip

For parameters in $\R^K$ we shall consider the type I errors for different coordinates separately, for instance $\theta_k \notin I^k$ is a rejection of the hypothesis $H_0: \theta^0_k = \theta_k$. This leads to the following definition of coverage when having multiple confidence intervals: 

\begin{defi}[coverage and type I risk, multivariate] \label{coverage-defi-mv}
Let $\theta = (\theta_1, \dots, \theta_K) \in \R^K$ be $K$ unknown parameters of interest for a statistical model $\{P_\theta | \theta \in \R^K\}$ that generates $X \in \mathcal{X}$, and assume that to each coordinate of $\theta$ is associated a confidence interval series  $\{I_\alpha^k\}$. 
As for the univariate case, the statistical model can depend on other omitted, but  fixed unknowns.

Assume that we observe coordinate-wise confidence interval series $\{I_\alpha^k\}$ corresponding to an observation $x \in \mathcal{X}$, and let $I = I^1 \prikt{\times} I^K$, $I^k \in \{I^k_\alpha\}$ be a confidence region. 
	
We
define the type I risk (or adjusted $p$-value) for $\theta= (\theta_1, \dots, \theta_K)$ at coordinates $\tilde{K} \subseteq \{1, \dots, K\}$ as 
\begin{equation}
\inf_{(\alpha_1 \prikt{,} \alpha_K: \  \theta_k \notin I^k_{\alpha_k}\;  \forall k \in \tilde{K})} P_\theta \left(\bigcup_{k \in \tilde{K}} (\theta_k \notin J^k_{\alpha_k})  \middle| \M{D} \right) 
\label{risk1-multi}
\end{equation}
and define the (joint)  type I risk of $I$ or $I^1 \prikt{,} I^K$ as
$$
\sup\{\text{type I risk}(\theta) \text{ at coordinates }  \tilde{K} \text{ where } \tilde{K} = \{k: \theta_k \notin I^k \}
| \theta \notin I\}
$$
In other words, the type I risk is the chance of making any type I error under $H_0: \theta_0 = \theta$ when using $I$ for inference. 
\end{defi} 
As an example, the coverage of the usual 95\% confidence interval for a single parameter in the linear normal model is $0.95$, but the joint coverage of 95\% confidence intervals is less than $0.95$. In case of independence between coordinates, the joint coverage of $K$ independent $(1-\alpha)$ confidence intervals is $(1-\alpha)^K$.

\subsection{Confidence interval for a single parameter} \label{sect-single-par}

For notation, let  $S_N$ denote the symmetric group of order $N$. We shall identify a permutation $s \in S_N$ with its corresponding permutation function $\R^N \pil \R^N$. We shall use $e$ to refer to the identity permutation.

\paragraph{Statistical model}
Assume $N$ observations $X_1, \dots, X_N \in \R$. Here $X_{i} = \phi_i(\theta) + \epsilon_i$ for an unknown parameter of interest $\theta \in \R$  and an a priori known 'covariate function' $\phi_i: \R \pil \R$. 
For example,  $\phi_i(\theta) = \theta x_i$ for a simple linear regression on the covariate $x = (x_1, \dots, x_N)$.
We assume the residuals $\epsilon_1, \dots , \epsilon_N \in \R^k$ to have the \emph{exchangeability} condition. That is, 
$$
(\epsilon_1, \dots, \epsilon_N) \stackrel{D}{=} s(\epsilon_1, \dots, \epsilon_N) , \quad s \in S_N
$$
[but otherwise we do not put any additional restrictions on $\epsilon$. ]
The common sufficient criterion for exchangeability is that $\epsilon_1, \dots, \epsilon_N$ are i.i.d. We refer to \cite{pesarin} for a discussion.

\paragraph{Test statistics}
We shall assume that we are given a test statistic $t: \R^N \pil \R$. It follows from the properties below that $t$ is a one-tailed statistic for which large values of $t$ are considered extreme. 

Let $\hat{\theta}$  be the $\theta$ which minimises $\theta \mapsto t(X_1 - \phi_1({\theta}), \dots, X_N -  \phi_N({\theta}))$. We will interpret and refer to $\hat{\theta}$ as the \emph{estimate} of $\theta$.

We shall assume that the following properties hold true with probability one for all $s \in S_N$, except for a "negligible" set of permutations (discussed below):

\begin{enumerate}
	\item Minimality of the unpermuted data in $\hat{\theta}$:
	$$
	t(X_1 - \phi_1(\hat{\theta}), \dots, X_N -  \phi_N(\hat{\theta})) <
	t \circ s(X_1 - \phi_1(\hat{\theta}), \dots, X_N -  \phi_N(\hat{\theta})) 
	$$	
	\item Monotonicity:
	$$
	\theta \mapsto t (X_1 - \phi_1(\theta), \dots, X_N -  \phi_N(\theta)) - 
	t \circ s(X_1 - \phi_1(\theta), \dots, X_N -  \phi_N(\theta)) 
	$$	
	is strictly decreasing for  $\theta < \hat{\theta}$ and strictly increasing for $\theta > \hat{\theta}$.
	\item Eventual "significance":
	$$
	\begin{aligned}
	\liminf_{\theta \pil - \infty} t (X_1 - \phi_1(\theta), \dots, X_N -  \phi_N(\theta)) - 
	t \circ s(X_1 - \phi_1(\theta), \dots, X_N -  \phi_N(\theta))  > 0 \\
	\liminf_{\theta \pil \infty} t (X_1 - \phi_1(\theta), \dots, X_N -  \phi_N(\theta)) - 
	t \circ s(X_1 - \phi_1(\theta), \dots, X_N -  \phi_N(\theta))  > 0
	\end{aligned}
	$$
\end{enumerate}
Since the above properties are not valid for all $s \in S_N$ (e.g. by selecting $s = e$), we have to consider a "negligible" set ${V} \subset S_N$, for which the above property does not hold. The negligibility criterion is to be interpreted as $\frac{\# V}{\# S_N}$ being small, preferably much smaller than the significance level $\alpha$.

\begin{figure}
	\centering
	\includegraphics[trim = 0 30 0 0, clip]{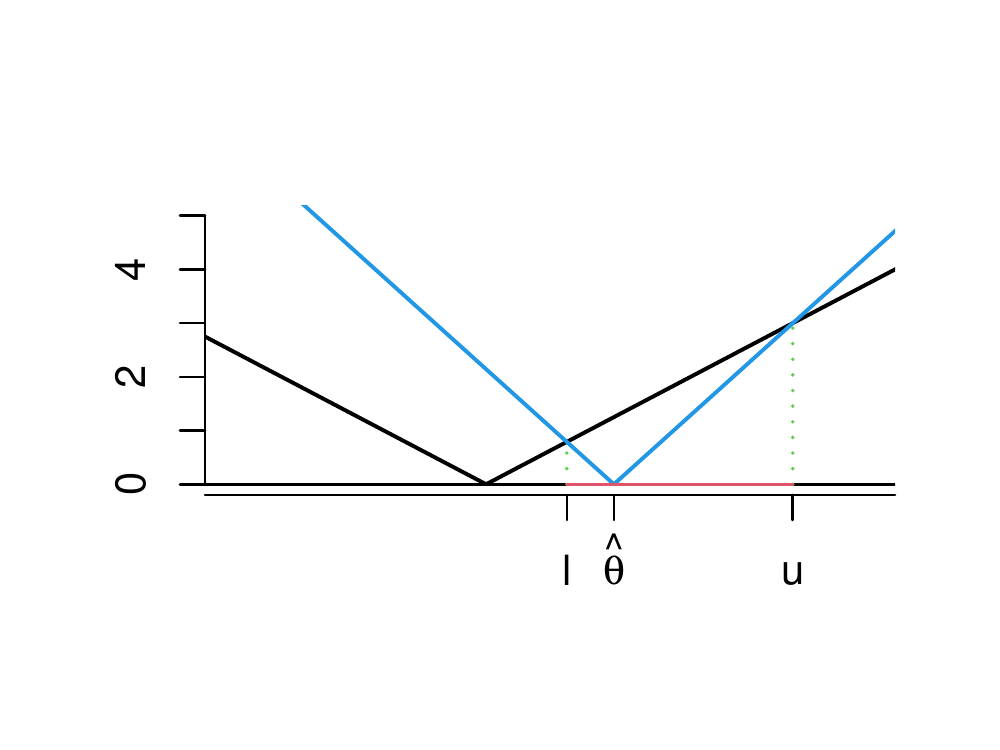}
	\caption{Illustration of the test statistic as a function of $\theta$.
		Blue is the test statistic for the unpermuted data, black is  the test statistic for a non-negligible permutation, and $l$ and $u$ are the interval limits from the algorithm.}
	\label{fig-test-stat}
\end{figure}

\paragraph{Pointwise confidence intervals}

Let $s \in S_N$ be a non-negligible permutation. 
From the properties (1) - (3) above, it holds that there exists an interval $[l,u] \subset \R$ such that:
$$
t (X_1 - \phi_1(\theta), \dots, X_N -  \phi_N(\theta)) >
t \circ s (X_1 - \phi_1(\theta), \dots, X_N -  \phi_N(\theta)) 
$$	
iff $\theta \notin (l,u)$. Furthermore, $\hat{\theta} \in (l,u)$. See Figure \ref{fig-test-stat} for an illustration. \medskip

We now define a confidence interval $[L, U]$ of nominal level $1- \alpha$. 
 Our algorithm consists of two steps:
\begin{enumerate}
	\item Let $s_1, \dots , s_M \in S_N$ be random permutations. For $m = 1,  \dots, M$, define $l_m$ and $u_m$ as the interval limits above, and set $l_m = - \infty, u_m = \infty$ when $s_m$ is negligible. 
	
	\item[2a.] Define $L$ as the $\alpha$ quantile of $(l_1, \dots, l_M)$.
	
	\item[2b.] Define $U$ as the $(1 - \alpha)$ quantile of $(u_1, \dots, u_M)$. %
\end{enumerate}
This construction satisfies $[L_1, U_1] \subseteq [L_2, U_2] \eb \alpha_2 < \alpha_1$ (when using the same set of permutations), and thus satisfies the criteria of Definition 1. \medskip

For the proof that the procedure works, we need the following lemma, which connects the confidence intervals with quantiles of a permutation test. We use the following definition of quantile in the lemma: $Q_{\gamma}(a_1, \dots, a_M) = a_{(\lceil M\gamma \rceil)}$.

\begin{lemma} \label{lemma-et} Let $[L,U]$ be a  $(1 - \gamma)$ confidence interval constructed using the algorithm above. Define
	$$T_\text{obs} = t(X_1 - \phi_1(\theta), \dots, X_N -  \phi_N(\theta)) $$
	and
	$$T_m = t \circ s_m(X_1 - \phi_1(\theta), \dots, X_N -  \phi_N(\theta)), \quad  m = 1, \dots, M  $$
	
	Let $\gamma \in (0,1)$. Then $\theta \notin I_\gamma$ iff a fraction at most $\gamma$ of 
	$\{T_m\}_{m =1}^M$
	are larger than $T_\text{obs}$, ie. $M^{-1}\#\{m: T_m > T_\text{obs} \} \leq \gamma $. 
	In other words, $\theta \notin I_\gamma$ iff the associated hypothesis $H_0: \theta_0 = \theta$ is rejected.
	
\end{lemma}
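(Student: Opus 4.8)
The plan is to connect the geometric interval construction from the algorithm with the counting statement about the permutation test quantiles, by carefully tracking what each permutation contributes. The central observation is that for a single non-negligible permutation $s_m$, the defining property of its interval $(l_m, u_m)$ says precisely that $T_{\text{obs}} > T_m$ iff $\theta \notin (l_m, u_m)$, where here $T_{\text{obs}}$ and $T_m$ are evaluated at the \emph{true} $\theta$ being tested (not at $\hat\theta$). So the event $\{T_m > T_{\text{obs}}\}$ is equivalent to $\theta \in (l_m, u_m)$, i.e.\ $l_m < \theta < u_m$. Thus counting how many $T_m$ exceed $T_{\text{obs}}$ is the same as counting how many intervals $(l_m, u_m)$ contain $\theta$, which splits into counting how many $l_m$ lie below $\theta$ \emph{and} how many $u_m$ lie above $\theta$.

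First I would make the equivalence $\{T_m > T_{\text{obs}}\} \Leftrightarrow \{l_m < \theta < u_m\}$ fully explicit, being careful about the negligible permutations: when $s_m \in V$ we set $l_m = -\infty, u_m = \infty$, so the interval trivially contains $\theta$, and I should check this is consistent with how we want to treat those permutations in the count (they contribute to "$T_m > T_{\text{obs}}$"-type events or are otherwise controlled; the negligibility assumption $\#V/\#S_N \ll \alpha$ is what keeps them harmless). Next I would translate the two-sided containment $l_m < \theta < u_m$ into the separate conditions on $L$ and $U$. Since $L = Q_\alpha(l_1,\dots,l_M)$ is the $\lceil M\gamma\rceil$-th order statistic of the $l_m$'s (with $\gamma = \alpha$), the statement $\theta \ge L$ is equivalent to "$\theta$ exceeds at least a fraction $\gamma$ of the lower limits," and symmetrically $\theta \le U$ where $U = Q_{1-\gamma}(u_1,\dots,u_M)$ corresponds to "$\theta$ is below at least a fraction $\gamma$ of the upper limits." The definition $I_\gamma = [L,U]$ then makes $\theta \notin I_\gamma$ equivalent to $\theta < L$ or $\theta > U$.

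The heart of the argument is to show that $\theta \notin [L,U]$ is equivalent to the single counting inequality $M^{-1}\#\{m : T_m > T_{\text{obs}}\} \le \gamma$. I would argue this by relating the left-exclusion and right-exclusion to the number of intervals \emph{not} containing $\theta$. Concretely, $\#\{m : l_m < \theta < u_m\} = M - \#\{m: T_m > T_{\text{obs}}\}$ up to the handling of ties/boundary cases, so the fraction of $T_m$ exceeding $T_{\text{obs}}$ being at most $\gamma$ means the fraction of intervals containing $\theta$ is at least $1-\gamma$. I then need to verify that "at least a fraction $1-\gamma$ of the $(l_m,u_m)$ contain $\theta$" is exactly the conjunction $L \le \theta \le U$ under the quantile definitions — and, crucially, that the complementary exclusion is attributable to a single side. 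This is where the monotonicity in property (2) and the ordering $\hat\theta \in (l_m,u_m)$ do the real work: they guarantee the intervals are genuine intervals around a common region, so an interval fails to contain $\theta$ either by $u_m \le \theta$ or by $l_m \ge \theta$ but in a way that the order statistics $L,U$ cleanly separate the two tails.

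The main obstacle I anticipate is the bookkeeping of ceilings, strict-versus-weak inequalities, and ties in the quantile definition $Q_\gamma(a_1,\dots,a_M) = a_{(\lceil M\gamma\rceil)}$, together with correctly absorbing the negligible set $V$ into the count without breaking the exact equivalence. I would handle this by working with the order statistics directly: set $k = \lceil M\gamma\rceil$, express $\theta < L$ as "$\theta$ is smaller than the $k$-th smallest $l_m$," i.e.\ at most $k-1$ of the $l_m$ are below $\theta$, and likewise for $U$, and then match these against the inequality $\#\{m: T_m > T_{\text{obs}}\} \le M\gamma$. The delicate point is to confirm that the one-sided exclusions on the two ends do not overlap or double-count, which follows because no single non-negligible interval can simultaneously exclude $\theta$ on both sides; once that is pinned down, the final equivalence to rejection of $H_0: \theta_0 = \theta$ by the permutation test (comparing $T_{\text{obs}}$ to $Q_\gamma(T_1,\dots,T_M)$) is immediate from the definition of the test.
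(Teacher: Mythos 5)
Your opening equivalence is correct and is exactly where the paper's own proof starts: for a non-negligible $s_m$, properties (1)--(3) give $T_{\text{obs}} > T_m \Leftrightarrow \theta \notin [l_m,u_m]$, so $\{T_m > T_{\text{obs}}\} = \{l_m < \theta < u_m\}$ and hence $\#\{m: T_m > T_{\text{obs}}\} = \#\{m: l_m < \theta < u_m\}$. But the paragraph you call the heart of the argument contradicts this: you write $\#\{m : l_m < \theta < u_m\} = M - \#\{m: T_m > T_{\text{obs}}\}$ (the two counts are \emph{equal}, not complementary), and you conclude that rejection, $M^{-1}\#\{m: T_m > T_{\text{obs}}\} \le \gamma$, means at least a fraction $1-\gamma$ of the intervals contain $\theta$. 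The correct reading is the opposite: rejection means at most a fraction $\gamma$ of the intervals contain $\theta$, i.e.\ at least $1-\gamma$ of them \emph{exclude} it. The verification you then set up --- ``at least a fraction $1-\gamma$ of the $(l_m,u_m)$ contain $\theta$ iff $L \le \theta \le U$'' --- would, if carried out, identify rejection with $\theta \in [L,U]$, proving the lemma with the wrong sign. This is not a matter of ceilings and ties; as written, the central step fails, and it fails precisely because it negates the equivalence you correctly stated two paragraphs earlier.

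The repair is available from your own ingredients, provided you use the fact $\hat\theta \in (l_m,u_m)$ at full strength. The point is not that ``no single interval can exclude $\theta$ on both sides'' (trivially true, and insufficient to separate the two tails); it is that since \emph{every} non-negligible interval contains $\hat\theta$, all exclusions of a \emph{fixed} $\theta$ occur on the same side, determined by the sign of $\theta - \hat\theta$. The paper makes this the organizing step: assume WLOG $\theta > \hat\theta$, so $\theta > l_m$ for all $m$ (hence $\theta \geq L$ automatically) and $\theta \notin I_\gamma \Leftrightarrow \theta > U_\gamma$; moreover $T_{\text{obs}} > T_m \Leftrightarrow \theta > u_m$, so the whole problem reduces to the $u$-sample, where $U_\gamma = u_{(\lceil M(1-\gamma)\rceil)}$ gives $\theta > U_\gamma \Leftrightarrow T_{\text{obs}} > T_{(\lceil M(1-\gamma)\rceil)}$, and at most $\lfloor M\gamma \rfloor$ of the $T_m$ exceed that order statistic. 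With your count identity corrected to equality, your chain becomes exactly this: rejection $\Leftrightarrow$ at most $\lfloor M\gamma\rfloor$ of the $u_m$ exceed $\theta$ $\Leftrightarrow$ at least $\lceil M(1-\gamma)\rceil$ of the $u_m$ lie at or below $\theta$ $\Leftrightarrow$ $\theta > U$ (modulo the boundary-tie conventions that both you and the paper gloss over). Your handling of negligible permutations --- the convention $l_m = -\infty$, $u_m = \infty$ places them on the non-rejection side of the count --- is consistent with the paper's treatment, so that part of your plan is fine; only the direction of the counting identity and the stated separation principle need fixing.
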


\begin{proof}
	Set $ I_\gamma = [L_\gamma, U_\gamma]$. 

	Assume  $\theta > \hat{\theta}$ which means that we have the relation $\theta \notin I_\gamma \eb \theta > U_\gamma$. The case of $\theta < \hat{\theta}$ is analogous. 
	Observe that
	$$
	T_\text{obs} > T_m \eb \theta > u_m
	$$
	Thus when 
	$$
	T_\text{obs} > T_{(j)} \eb \theta > u_{(j)} \eb \theta > Q_{j/M}(u_1, \dots, u_M)
	$$
	Now using $U_\gamma = Q_{1-\gamma}(u_1, \dots, u_M) = u_{(\lceil M(1-\gamma) \rceil)}$. 
	$$
	\theta > U_\gamma \eb \theta > u_{(\lceil M(1-\gamma) \rceil)} \eb T_\text{obs} > T_{(\lceil M(1-\gamma) \rceil)} 
	$$
	At most $\lfloor M\gamma \rfloor$ of $(T_1, \prikt{,} T_M)$ are larger than $T_{(\lceil M(1-\gamma) \rceil)}$, showing the claim. 
	
\end{proof}

\begin{prop} \label{coverage-prop} 
	Let $[L,U]$ be a  $(1 - \alpha)$ confidence interval constructed using the algorithm above. The $[L,U]$ has a coverage of at least $(1 - \alpha)$ conditionally on $\bar{\mathcal{F}} = \bar{\mathcal{F}}(X_1, \prikt{,} X_N)$, when $\{s_1, \dots, s_M\} = S_N$.

	Additionally, 
\begin{equation}
	P_{\theta}(\theta  \notin [L,U] ) \leq \alpha 
	\label{prop1-0.95}
\end{equation}
where $[L,U]$ is viewed as a random variable. 

With the proposed method of using random permutations for $\{s_1, \dots, s_M\}$, \eqref{prop1-0.95} is an approximation which becomes increasingly correct as $M \pil \infty$. See also Section \ref{sect-uncertain}. 
\end{prop}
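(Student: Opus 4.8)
The plan is to derive the proposition from Lemma \ref{lemma-et} together with the type I error control for permutation tests stated in the Proposition of Section \ref{perm-test-sect}. The key observation is that when $\theta$ is the \emph{true} parameter, the residuals $X_i - \phi_i(\theta) = \epsilon_i$ are exchangeable, so that $T_{\text{obs}} = t(\epsilon_1 \prikt{,} \epsilon_N)$ and $T_m = t \circ s_m(\epsilon_1 \prikt{,} \epsilon_N)$ are the test statistic evaluated on an exchangeable vector and on its permutations. By Lemma \ref{lemma-et} the event $\theta \notin [L,U]$ is exactly the event that the permutation test rejects $H_0: \theta_0 = \theta$, so the coverage claim reduces to bounding the type I error of that test.

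First I would prove the conditional statement $P_\theta(\theta \notin [L,U] \mid \bar{\mathcal{F}}) \leq \alpha$. Conditioning on $\bar{\mathcal{F}} = \bar{\mathcal{F}}(X_1 \prikt{,} X_N)$ fixes the data up to permutation, so the multiset $\{t \circ s(\epsilon) : s \in S_N\}$ is $\bar{\mathcal{F}}$-measurable; with $\{s_1 \prikt{,} s_M\} = S_N$ this is precisely the multiset $\{T_1 \prikt{,} T_M\}$. Exchangeability forces the conditional law of $(\epsilon_1 \prikt{,} \epsilon_N)$ to be uniform over its $S_N$-orbit, and hence $T_{\text{obs}} = t(\epsilon_1 \prikt{,} \epsilon_N)$ is, conditionally on $\bar{\mathcal{F}}$, a uniform draw from $\{T_1 \prikt{,} T_M\}$. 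By Lemma \ref{lemma-et} rejection is equivalent to $T_{\text{obs}} > T_{(\lceil M(1-\alpha)\rceil)}$, and at most $\lfloor M\alpha\rfloor$ of the $M$ values exceed this threshold; a uniform draw therefore triggers rejection with conditional probability at most $\lfloor M\alpha\rfloor / M \leq \alpha$. This gives the asserted conditional coverage $1-\alpha$.

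The marginal inequality \eqref{prop1-0.95} then follows at once from the tower property used in the Methodology section,
$$
P_\theta(\theta \notin [L,U]) = \E\!\left[ P_\theta(\theta \notin [L,U] \mid \bar{\mathcal{F}}) \right] \leq \alpha .
$$
For the closing remark, when $\{s_1 \prikt{,} s_M\}$ is sampled at random rather than taken to be all of $S_N$, I would invoke the law of large numbers: the empirical fraction $M^{-1}\#\{m : T_m > T_{\text{obs}}\}$ converges almost surely to its orbit-average, so the random-permutation interval limits $L, U$ converge to the full-enumeration ones and \eqref{prop1-0.95} holds in the limit $M \pil \infty$, being an approximation for finite $M$.

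The main obstacle is the rigorous justification of the conditional-uniformity step together with the handling of the negligible permutation set. For uniformity one must argue carefully when the coordinates of $\epsilon$ are not distinct, since then the $S_N$-orbit is smaller than $M = N!$ and the multiset $\{T_m\}$ carries multiplicities; ties act conservatively in that they only reduce the count of strictly larger values, but this needs to be stated precisely. More delicate is the set $V \subset S_N$ on which properties (1)--(3) fail and where we set $l_m = -\infty, u_m = \infty$: I must check that these substitutions only \emph{widen} $[L,U]$, and hence cannot reduce coverage, and that the condition $\#V / \#S_N \ll \alpha$ prevents the negligible permutations from displacing the relevant order statistic $T_{(\lceil M(1-\alpha)\rceil)}$. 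This interplay between the negligible set and the quantile index is where I expect the argument to require the most care.
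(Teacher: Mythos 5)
Your proposal is correct and takes essentially the same route as the paper's proof: reduction via Lemma \ref{lemma-et} to the type I error of the permutation test, the observation that conditionally on $\bar{\mathcal{F}}$ the observed statistic is a uniform draw (with multiplicity) from the multiset $\{T_1, \dots, T_M\}$, the resulting bound $\lfloor M\alpha \rfloor / M \leq \alpha$, and the tower property for \eqref{prop1-0.95}. The only difference is one of formalization --- the paper implements your ``uniform draw from the orbit'' step by introducing an auxiliary uniform permutation $S \uaf X$ and a multiset-bijection argument before conditioning on $\mathcal{F}(X)$ --- and your explicit flagging of ties and of the fact that the substitutions $l_m = -\infty$, $u_m = \infty$ for negligible permutations only widen $[L,U]$ is, if anything, more careful than the paper's own treatment.
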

The  usage of random sampling in Algorithm 1 is referred in \cite{pesarin} as the \emph{Conditional Monte Carlo} method and is a practical need in permutation tests due to the infeasibility of evaluating all $N!$ permutations in $S_N$ for all but the smallest $N$.

\begin{proof} Use $X = (X_1, \prikt, X_N)$.
Let $\theta \notin [L,U]$. We must show that the type I risk for $\theta$ is less than $\alpha$. 

Define 
$$T_\text{obs} = t(X_1 - \phi_1(\theta), \dots, X_N -  \phi_N(\theta)) $$
and
$$T_m = t \circ s_m(X_1 - \phi_1(\theta), \dots, X_N -  \phi_N(\theta)), \quad  m = 1, \dots, M  $$

Let $\gamma \in (0,1)$. By Lemma \ref{lemma-et}, $\theta \notin I_\gamma$ iff a fraction at most $\gamma$ of 
$\{T_m\}_{m =1}^M$
are larger than $T_\text{obs}$, ie. $M^{-1}\#\{m: T_m > T_\text{obs} \} \leq \gamma $. We shall therefore evaluate the probability 
\begin{equation}
P(M^{-1}\{m: T_m > T_\text{obs} \} \leq \gamma |\bar{\mathcal{F}}) \label{tobs-tm-prob}
\end{equation}
for $\gamma \leq \alpha$. 
Now assume $H_0: \theta_0 = \theta$ is true. Then the distribution of 
$$\mathcal{T}:= t \circ s(X_1 - \phi_1(\theta), \dots, X_N -  \phi_N(\theta)) $$
is unchanged by  $s \in S_N$, also conditionally on $\bar{\mathcal{F}}$. So if let $S$ be a random sample from $S_N$, $S \uaf X$, we get
\begin{equation}
P(M^{-1}\#\{m: \tilde{T}_m > \tilde{T}\} \leq \gamma | \bar{\mathcal{F}}) = 
P(M^{-1}\#\{m: {T}_m > \tilde{T}\} \leq \gamma | \bar{\mathcal{F}}) \label{ttilde-tm-prob}
\end{equation}
where
$$\tilde{T} = t \circ S(X_1 - \phi_1(\theta), \dots, X_N -  \phi_N(\theta)) $$
and
$$\tilde{T}_m = t \circ s_m \circ S(X_1 - \phi_1(\theta), \dots, X_N -  \phi_N(\theta)), \quad  m = 1, \dots, M  $$
The equality in \eqref{ttilde-tm-prob} follows from the fact that the $\tilde{T}_m$s are a bijection of the ${T}_m$s. We will now condition on $X$, under which the $T_m$s are no longer stochastic:
\begin{equation}
P(M^{-1}\#\{m: {T}_m > \tilde{T}\} \leq \gamma| \bar{\mathcal{F}} ) =
\E[P(M^{-1}\#\{m: {T}_m > \tilde{T}\} \leq \gamma | \mathcal{F}(X) )| \bar{\mathcal{F}}]
\end{equation}
Conditionally on $X$, $\tilde{T}$ randomly attains one of the $T_m$s, counted with multiplicity. Therefore
\begin{equation}
P(M^{-1}\#\{m: {T}_m > \tilde{T}\} \leq \gamma | \mathcal{F}(X)) \leq \gamma
\end{equation}
and hence
\begin{equation}
P(\theta \notin J_\gamma | \bar{\mathcal{F}} ) = \E[P(M^{-1}\#\{m: {T}_m > \tilde{T}\} \leq \gamma | \mathcal{F}(X) ) | \bar{\mathcal{F}}] \leq \gamma
\end{equation}
where $J_\gamma$ is $I_\gamma$ seen as a random variable. Since $\gamma$ was assumed smaller than $\alpha$, the result follows.

The result \eqref{prop1-0.95} follows as an immediate consequence. 
\end{proof}
Below follows two  examples of statistical models; the two-sample case can be seen as a special case of the linear regression. 

\begin{eks}[Two-sample test] \label{eks-two-sample}
	Assume $Y_1, \dots Y_{n1}, Z_1 \dots, Z_{n2}$ are two samples with different means and i.i.d. errors, commonly referred to as the (unpaired) two-sample setup.
	
	In detail,
	$$
	Y_i = \mu_Y + \epsilon_{1i}, \quad Z_j = \mu_Z + \epsilon_{2j}, \quad i = 1, \dots, n_1, j = 1, \dots, n_2
	$$
	where all $\epsilon_{\cdot \cdot} \sim D$ i.i.d. for an unknown distribution $D$. 
	We wish to infer a confidence interval for difference in means, $\theta = \mu_Y - \mu_Z$.
	
	We can now use Algorithm 1 with covariate function $\phi$ and test statistic $t$ given by
	$$
	\phi_i(\theta) = \begin{cases} \theta & i = 1, \dots, n_1 \\
	0 & i = n_1 + 1, \dots, n_1 + n_2
	\end{cases}, \quad 
	t(X) = |\bar{X}_Y - \bar{X}_Z| 
	$$		
	where $\bar{X}_Y $ is the average of the first $n_1$ values and $\bar{X}_Z $ is the average of the remaining $n_2$ values.
	Then $t$ satisfies the properties (1)-(3) above, and the estimate of ${\theta}$ is given by $\hat{\theta} = \bar{Y} - \bar{Z}$.
	
Assume $n_1 > n_2$. The set of negligible permutations consists of those permutations that map $\{1, \dots, n_1\}$ to $\{1, \dots, n_1\}$. There are $n_1!n_2!$ such permutations; thus the fraction of negligible permutations is 
	$$
	\frac{\# V}{\# S_{n_1 + n_2}} = \frac{n_1!n_2!}{(n_1 + n_2)!} = 1 / \binom{n_1}{n_1 + n_2}
	$$
	which is small and goes rapidly towards zero for increasing sample sizes. 

\end{eks}

\begin{eks}[Linear regression] \label{eks-lin-reg-test}
Here we consider the confidence interval for $\beta$ in the linear regression model, $y = \alpha + \beta x + \epsilon$.
	In detail, the statistical model is
$$
Y_i = \alpha + \beta \cdot x_i + \epsilon_i, \quad i = 1, \dots, N
$$	
where $\epsilon_i \sim D$ i.i.d. for an unknown distribution $D$, and $x_1, \dots,  x_N$ are regressor values. 
	
	We can now use Algorithm 1 with covariate function $\phi$ and test statistic $t$ given by
$$
\phi_i(\theta) = \theta x_i , \quad
t(\epsilon_1, \dots, \epsilon_N) = \left| \sum_{i=1}^{N} (x_i - \bar{x})(\epsilon_i - \bar{\epsilon}) \right|
$$	
Then $t$ satisfies the properties (1)-(3) above, and the estimate of ${\beta}$ is given by the usual least squares estimator; ie $\hat{\beta} = \frac{\sum_{i=1}^N (x_i - \bar{x})(y_i - \bar{y})}{\sum_{i=1}^N (x_i - \bar{x})^2}$.
	
\paragraph{Negligible permutations} The negligible permutations for the linear regression are exactly those permutations for which $s(x) - \bar{x} = \pm (x - \bar{x})$ (see the appendix).
This in general depends on the experimental setup; ie. the $x$ values. As for the two-sample test, the fraction of negligible permutations decreases rapidly towards zero for increasing sample sizes.  		
\end{eks}

\subsection{Simultaneous confidence intervals for multiple testing} \label{sect-multi-par}
In this section we consider the scenario of confidence intervals under multiple testing. We will assume $K$ parameters $\theta_1 , \dots , \theta_K \in \R$ and $N$ observations $X_1 , \dots , X_N \in \R^K$. We impose the model of Section \ref{sect-single-par} on each coordinate, ie. $X_{ik} = \phi_{ik}(\theta_k) + \epsilon_{ik}$.
There can be arbitrary dependence between coordinates, but $\epsilon_1 , \dots , \epsilon_N$ must be jointly exchangeable:
$$
(\epsilon_1 , \dots , \epsilon_N) \stackrel{D}{=} s(\epsilon_1, \dots, \epsilon_N), \quad s \in S_N
$$ 

We assume that we are given a test statistic $t_k$ for each coordinate $k = 1, \dots, K$, such that $t_k$ satisfies the conditions described in Section \ref{sect-single-par}. Applying Algorithm 1 jointly on the coordinates (ie. using the same (random) permutations $s_1, \dots, s_M$) then produces $(1-\alpha)$ confidence intervals $(L_1, U_1), \dots, (L_K, U_K)$.

We now consider the two following aspects: 
\begin{enumerate}
	\item What is joint coverage level of $(L_1, U_1), \dots, (L_K, U_K)$?
	\item How do we adjust  $(L_1, U_1), \dots, (L_K, U_K)$ such that the joint coverage is $(1 - \alpha)$?
\end{enumerate}

\paragraph{Computing the joint coverage level for a given $\alpha$}

Though the confidence intervals $[L_1, U_1], \dots, [L_K, U_K]$ each have $(1-\alpha)$ coverage, the joint coverage is less than  $(1-\alpha)$. 

Let $C = \{L_1, U_1\} \times \dots \times \{L_K, U_K\}$ denote the corners of $B$, and let $\hat{\theta} = (\hat{\theta}_1, \dots, \hat{\theta}_K)$ denote the joint estimate of ${\theta}_0$.

We calculate the joint coverage $\alpha_\text{multiple}$ according to the following algorithm: 
\begin{enumerate}
	\item For $i = 1, \dots, M$ and $k = 1, \dots, K$, define $l_{ik}$ and $u_{ik}$ as in Algorithm 1.
	
	\item For each $c \in C$, we calculate the number of instances $R_c$ for which
	$$
	\text{at least one of } 
	\begin{cases}
	l_{nk} \in [c_k , \hat{\theta}_k] & c_k = L_k \\
	u_{nk} \in [\hat{\theta}_k, c_k] & c_k = U_k
	\end{cases}	
	\text{ is false}, \quad n = 1, \dots, N.
	$$
	
	\item Then we set $\alpha_\text{multiple} = \max_{c \in C} R_c / M$.
\end{enumerate}

\begin{prop}
	The joint coverage of $(L_1, U_1), \dots, (L_K, U_K)$ conditional on $\bar{\mathcal{F}} = \bar{\mathcal{F}}(X_1 \prikt{,} X_N)$ is at least $1 - \alpha_\text{multiple}$, when $\{s_1, \dots, s_M\} = S_N$. 
\end{prop}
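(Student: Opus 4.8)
The plan is to mirror the single-parameter coverage proof (Proposition \ref{coverage-prop}), working conditionally on $\bar{\mathcal{F}} = \bar{\mathcal{F}}(X_1, \dots, X_N)$, and to exploit that one realized permutation acts on all $K$ coordinates simultaneously. Fix $\theta \notin I$ and set $\tilde{K} = \{k : \theta_k \notin I^k\}$. By Definition \ref{coverage-defi-mv} the type I risk at $\theta$ is the infimum, over $(\alpha_k)$ with $\theta_k \notin I^k_{\alpha_k}$ for $k \in \tilde{K}$, of $P_\theta(\bigcup_{k \in \tilde{K}}(\theta_k \notin J^k_{\alpha_k}) \mid \bar{\mathcal{F}})$. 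Since the inner probability is nondecreasing in each $\alpha_k$ while the constraint forces $\alpha_k$ above the observed boundary value $\alpha_k^\ast$ (the smallest level at which $\theta_k$ leaves the observed interval $I^k_{\alpha_k}$), the infimum is attained at $\alpha_k = \alpha_k^\ast$. It thus suffices to bound $P_\theta(\bigcup_{k \in \tilde{K}}(\theta_k \notin J^k_{\alpha_k^\ast}) \mid \bar{\mathcal{F}}) \le \alpha_\text{multiple}$.

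For the bound I rewrite each event $\theta_k \notin J^k_{\alpha_k^\ast}$ via Lemma \ref{lemma-et} as a rank condition on the coordinate-$k$ permuted statistics $T_{mk} = t_k \circ s_m(X_{1k} - \phi_{1k}(\theta_k), \dots, X_{Nk} - \phi_{Nk}(\theta_k))$, and repeat the device of Proposition \ref{coverage-prop}: under $H_0$ the residuals are \emph{jointly} exchangeable, so applying one uniform $S \in S_N$ (independent of the data) to every coordinate leaves the conditional law unchanged; reindexing by $\{s_m \circ S\} = \{s_m\} = S_N$ and then conditioning on $\mathcal{F}(X)$ freezes the $T_{mk}$. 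The crucial point -- and this is where dependence is harnessed rather than bounded away -- is that because the \emph{same} $S$ is used for all $k$, on the event $\{S = s_{m^\ast}\}$ the realized cross-coordinate vector is the single row $(T_{m^\ast 1}, \dots, T_{m^\ast K})$, with $m^\ast$ uniform on $\{1, \dots, M\}$. Hence the conditional joint-rejection probability equals $M^{-1}\#\{m^\ast : \exists\, k \in \tilde{K},\ \#\{m : T_{mk} > T_{m^\ast k}\} \le \alpha_k^\ast M\}$.

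Next I translate this count into the corner count $R_c$. Using the single-parameter equivalence $T_{\mathrm{obs},k} > T_{mk} \eb \theta_k > u_{mk}$ (and its $l$-analogue) together with $\alpha_k^\ast M = \#\{m : u_{mk} > \theta_k\}$, a row $m^\ast$ lying in the rejection tail at coordinate $k$ forces $T_{m^\ast k} \ge T_{\mathrm{obs},k}$ and therefore $u_{m^\ast k} \ge \theta_k \ge U_k$ on the side $c_k = U_k$ (symmetrically $l_{m^\ast k} \le \theta_k \le L_k$ when $c_k = L_k$). Thus each per-coordinate rejection event is contained in the corresponding ``limit beyond the bound'' event that defines $R_c$ for the corner $c = c(\theta)$ given by $c_k = U_k$ if $\theta_k > U_k$ and $c_k = L_k$ if $\theta_k < L_k$. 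Consequently the joint-rejection count is at most $R_{c(\theta)}$, the conditional risk is $\le R_{c(\theta)}/M \le \max_{c} R_c / M = \alpha_\text{multiple}$, and taking the supremum over $\theta \notin I$ (coordinates $k \notin \tilde{K}$ only enlarge $R_c$, so are harmless) leaves only the finitely many corners, yielding coverage $\ge 1 - \alpha_\text{multiple}$.

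The main obstacle is precisely the inclusion in the third paragraph. In Proposition \ref{coverage-prop} one compares the \emph{observed} statistic against the permuted ones, so $T_{\mathrm{obs}} > T_m \eb \theta > u_m$ applies verbatim; here the comparison is between two \emph{permuted} statistics $T_{m^\ast k}$ and $T_{mk}$, whereas $u_{mk}$ encodes only the crossing of the unpermuted curve with the $s_m$-curve. One must therefore route through the reference level $T_{\mathrm{obs},k}$, arguing that the threshold $\alpha_k^\ast M$ is calibrated so that ``$m^\ast$ in the top $\alpha_k^\ast M$'' implies $T_{m^\ast k} \ge T_{\mathrm{obs},k}$, and only then convert back to the limits. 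Ties and the ceiling in the quantile $Q_\gamma$ must be absorbed conservatively (they can only enlarge $R_c$), the slack between $\theta_k$ and $U_k$ makes $R_c$ an upper bound -- consistent with the ``$\ge$'' in the statement -- and the reduction of the supremum over $\theta \notin I$ to the corner set should be justified by monotonicity, namely that pushing $\theta_k$ farther from $I^k$ only shrinks each extreme set. I expect this translation, rather than the permutation argument itself, to be the technical heart of the proof.
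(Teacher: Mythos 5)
Your proposal is correct and follows the paper's overall architecture -- joint exchangeability, a single uniform $S \in S_N$ acting on all $K$ coordinates simultaneously, reindexing $\{s_m \circ S\} = S_N$ by the group structure, and conditioning on $\mathcal{F}(X)$ so that the conditional rejection probability becomes a count over a uniformly chosen row $m^\ast$ (this is exactly the paper's chain \eqref{p-gamma-k}--\eqref{R-formel}) -- but you execute the decisive translation step differently, and in fact more carefully, than the paper does. The paper sets all levels $\gamma_k = \alpha$, defines $R$ in \eqref{eq-anyofk} as the number of rows with $T^k_j > Q_{1-\alpha}(T^k_1 \prikt{,} T^k_M)$ for some $k$, and then asserts ``$R < R_c$'' in a single line ``by construction of the confidence interval $I^k$ and the definition of $\alpha_\text{multiple}$''. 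As literally stated that inequality is delicate: at the blanket level $\alpha$, the per-coordinate rejection set (the top rows ordered by the \emph{value} $T^k_j$ at $\theta_k$) need not be contained in the set counted by $R_c$ (the top rows ordered by the \emph{crossing point} $u_{jk}$), because properties (1)--(3) control each permuted curve only relative to the unpermuted one, not pairwise; the two sets have comparable cardinalities coordinatewise, but cardinality alone does not bound one union by another. Your route through the calibrated levels $\alpha_k^\ast$ repairs exactly this: evaluating at $(\alpha_k^\ast)$ is legitimate since the type I risk of Definition \ref{coverage-defi-mv} is an infimum over feasible levels, and then the rank condition $\#\{m : T^k_m > T^k_{m^\ast}\} \leq \alpha_k^\ast M \approx \#\{m : T^k_m > T^k_{\mathrm{obs}}\}$ forces $T^k_{m^\ast} \geq T^k_{\mathrm{obs}}$ by monotonicity of $x \mapsto \#\{m : T^k_m > x\}$, hence $u_{m^\ast k} \geq \theta_k > U_k$ via the equivalence in Lemma \ref{lemma-et}, so every rejecting row is genuinely counted in $R_{c(\theta)}$ -- the per-row inclusion the paper leaves implicit. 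What remains in your write-up is only the bookkeeping you already flag: ceilings and ties, plus the negligible permutations, which have $u_{mk} = \infty$ and so are always counted in $R_c$, but which can slightly inflate $\alpha_k^\ast$ relative to $\#\{m : T^k_m > T^k_{\mathrm{obs}}\}$; both you and the paper absorb this into the negligibility assumption. Net assessment: same skeleton as the paper, but your calibrated-level, observed-statistic translation is tighter at the key counting step and effectively supplies the justification the paper's terse ``$R < R_c$'' omits.
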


\begin{proof}

Set $B = [L_1, U_1] \times \dots \times [L_K, U_K]$ and let $\theta = (\theta_1, \dots, \theta_K)  \in \R^k$. We define 
\begin{align*}
T_\text{obs}^k &= t_k (X_{1k} - \phi_{1k}(\theta_k), \dots, X_{Nk} -  \phi_{Nk}(\theta_k)), \quad k = 1, \dots, K \\
T_\text{obs} &= (T_\text{obs}^1, \dots, T_\text{obs}^K) \\
\intertext{and}
T_m^k &= t_k \circ s_m(X_{1k} - \phi_{1k}(\theta_k), \dots, X_{Nk} -  \phi_{Nk}(\theta_k)), \quad k = 1, \dots, K, m = 1, \dots M \\
T_m &= (T_m^1, \dots, T_m^K), \quad m = 1 \prikt{,} M
\end{align*}

If $\theta \notin B$, we have the risk of making one or more type I errors. We must verify that this risk is less than $\alpha_\text{multiple}$.

So assume $\theta_k \notin [L_k, U_k]$ for a non-empty subset of $\{1, \dots, K\}$. Without loss of generalisation we can assume $\theta \notin [L_k, U_k]$ for $k = 1, \dots, \tilde{k}$.

Consider the number $R$ for which 
\begin{equation}
R = \stackrel[j = 1 \prikt{,} M]{}{\#} (\stackrel[k = 1, \dots, \tilde{k}]{}{\text{at least one of }} \: T_j^k > Q_{1- \alpha}(T^k_1 \prikt{,} T^k_M )) \label{eq-anyofk}
\end{equation}
By construction of the confidence interval $I^k$ and the definition of $\alpha_\text{multiple}$, it holds that $R < R_c$.

We shall evaluate the probability
\begin{equation*} P_\theta \left(\bigcup_{k \in \tilde{K}} (\theta_k \notin J^k_{\gamma_k}) \:|
\bar{\mathcal{F}}
\right) 
\end{equation*}
where $\gamma_k \leq \alpha$ for all $k \in \tilde{K}$. 

Following Lemma 1, 
$$
(\theta_k \notin J^k_{\gamma_k}) \eb M^{-1} \# \{m:  T_m^k > T_\text{obs}^k\} \leq \gamma_k
$$
Assume the true value of $\theta_0$ is $\theta_0 = \theta$. Then the distribution of 
$$\mathcal{T}_k = t_k \circ s(X_1 - \phi_1(\theta_k), \dots, X_N -  \phi_N(\theta_k)), \quad k = 1, \dots, K
$$
is unchanged by  $s \in S_N$. So if let $S$ be a random sample from $S_N$, $S \uaf X$, we get

\begin{multline} P_\theta \left( \bigcup_{k \in \tilde{K}} (  M^{-1} \# \{m:  T_m^k > T_\text{obs}^k\} \leq \gamma_k ) | \bar{\mathcal{F}} \right) =\\ 
P_\theta \left( \bigcup_{k \in \tilde{K}} (  M^{-1} \# \{m:  \tilde{T}_m^k > \tilde{T}^k\} \leq \gamma_k ) | \bar{\mathcal{F}} \right) = \\
P_\theta \left( \bigcup_{k \in \tilde{K}} (  M^{-1} \# \{m:  {T}_m^k > \tilde{T}^k\} \leq \gamma_k ) | \bar{\mathcal{F}} \right) \label{p-gamma-k}
\end{multline}
where
\begin{align*}
\tilde{T}^k &= t_k \circ S(X_{1k} - \phi_{1k}(\theta_k), \dots, X_{Nk} -  \phi_{Nk}(\theta_k)), \quad k = 1, \dots, K \\
\tilde{T} &= (\tilde{T}^1, \dots, \tilde{T}^K) \\
\intertext{and}
\tilde{T}_m^k &= t_k \circ s_m \circ S (X_{1k} - \phi_{1k}(\theta_k), \dots, X_{Nk} -  \phi_{Nk}(\theta_k)), \quad k = 1, \dots, K, m = 1, \dots M \\
\tilde{T}_m &= (\tilde{T}_m^1 \prikt{,} \tilde{T}_m^K), \quad m = 1 \prikt{,} M
\end{align*}

The equality in \eqref{p-gamma-k} follows from the fact that the $\tilde{T}_m$s are a bijection of the ${T}_m$s. We will now condition on $X$, under which the $T_m$ are no longer stochastic:
\begin{multline} P\left(\bigcup_{k \in \tilde{K}} (  M^{-1} \# \{m:  {T}_m^k > \tilde{T}^k\} \leq \gamma_k ) \middle| \bar{\mathcal{F}}(X_1 \prikt{,} X_N) \right) =  \\
\E[P\left(\bigcup_{k \in \tilde{K}} (  M^{-1} \# \{m:  {T}_m^k > \tilde{T}^k\} \leq \gamma_k ) | \mathcal{F}(X) \right) | \bar{\mathcal{F}}(X_1 \prikt{,} X_N) ] \label{ep-multi}
\end{multline} 
Conditionally on $X$, $\tilde{T}$ randomly attains one of the $T_m$s, counted with multiplicity, and $R$ defined in \eqref{eq-anyofk} is non-random. 
Therefore 
\begin{equation}
P\left(\bigcup_{k \in \tilde{K}} (  M^{-1} \# \{m:  {T}_m^k > \tilde{T}^k\} \leq \alpha ) \middle| \mathcal{F}(X)\right) = R/M \label{R-formel}
\end{equation}
As we in \eqref{R-formel} are considering a larger set compared to \eqref{ep-multi}, it holds that
$$
P\left(\bigcup_{k \in \tilde{K}} (  M^{-1} \# \{m:  {T}_m^k > \tilde{T}^k\} \leq \gamma_k ) \middle| \mathcal{F}(X) \right) \leq R/M \leq R_c/M = \alpha_\text{multiple}
$$
which ends the proof. 
\end{proof}
Note that we do not have an unconditional probability statement similar to equation \eqref{prop1-0.95} as this would require us to know the full copula of $(T_1, \dots, T_M)$ for every value of $\theta \in \R^K$. 

\paragraph{Adjusting the confidence level}
Complementing the multi-confidence level, we can adjust confidence intervals to a level $\alpha_\text{multiple}$, such that the multi-confidence level is $\alpha$. 

The procedure is straightforward: 
\begin{itemize}
	\item For a given $\alpha^*$, calculate $\alpha^*_\text{multiple}$
	\item Adjust $\alpha^*$ until $\alpha^*_\text{multiple} = \alpha$ or $ |\alpha^*_\text{multiple} - \alpha|$ is less than a given threshold. 
\end{itemize}

\subsection{Computational issues}  Let $M$ denote the number of permutations and $K$ the number of parameters.
Then the confidence interval for a single parameter has a computational cost which in principle is  $O(M \log M)$. The $\log M$ factor is due to the sorting of $l$ and $u$ values. Since sorting usually is very fast, the "practical" computational cost is $O(M)$, similar to  usual permutation tests.

However, the multiple testing procedure has computational cost $O(2^K)$ (for a fixed $M$). This is due to every corner in $[L_1, U_1] \prikt{\times} [L_K , U_K]$ being evaluated. This imposes a practical constraint on the size of $K$, though for at least $K = 15$ this should not be an issue.

\subsection{Uncertainties in confidence interval calculation} \label{sect-uncertain}
Due to the fact that our method involves random permutations, there will be some uncertainty in the confidence interval(s), even for a fixed realisation of data. 
 This property is a well-known feature of permutation tests, where this uncertainty decreases by increasing the number of permutations $M$.
We suggest/advise to use bootstrapping of the quantile vectors $l$ and $u$ to assess the effect of the random sampling from $S_N$. 

\section{Simulation \& application}

\subsection{Application: Monthly means of Canadian weather data}
In this section we applied the methodology to the well-known "Canadian weather" data set of functional data analysis \cite{ramsay}. 
We considered monthly means of two regions, \emph{Atlantic} and \emph{Continental}, consisting of 15 and 9 observations in $\R^{12}$, respectively. Data are illustrated in Figure \ref{fig-cdata}.

\begin{figure}[!htb]
	\includegraphics[width=0.5\textwidth]{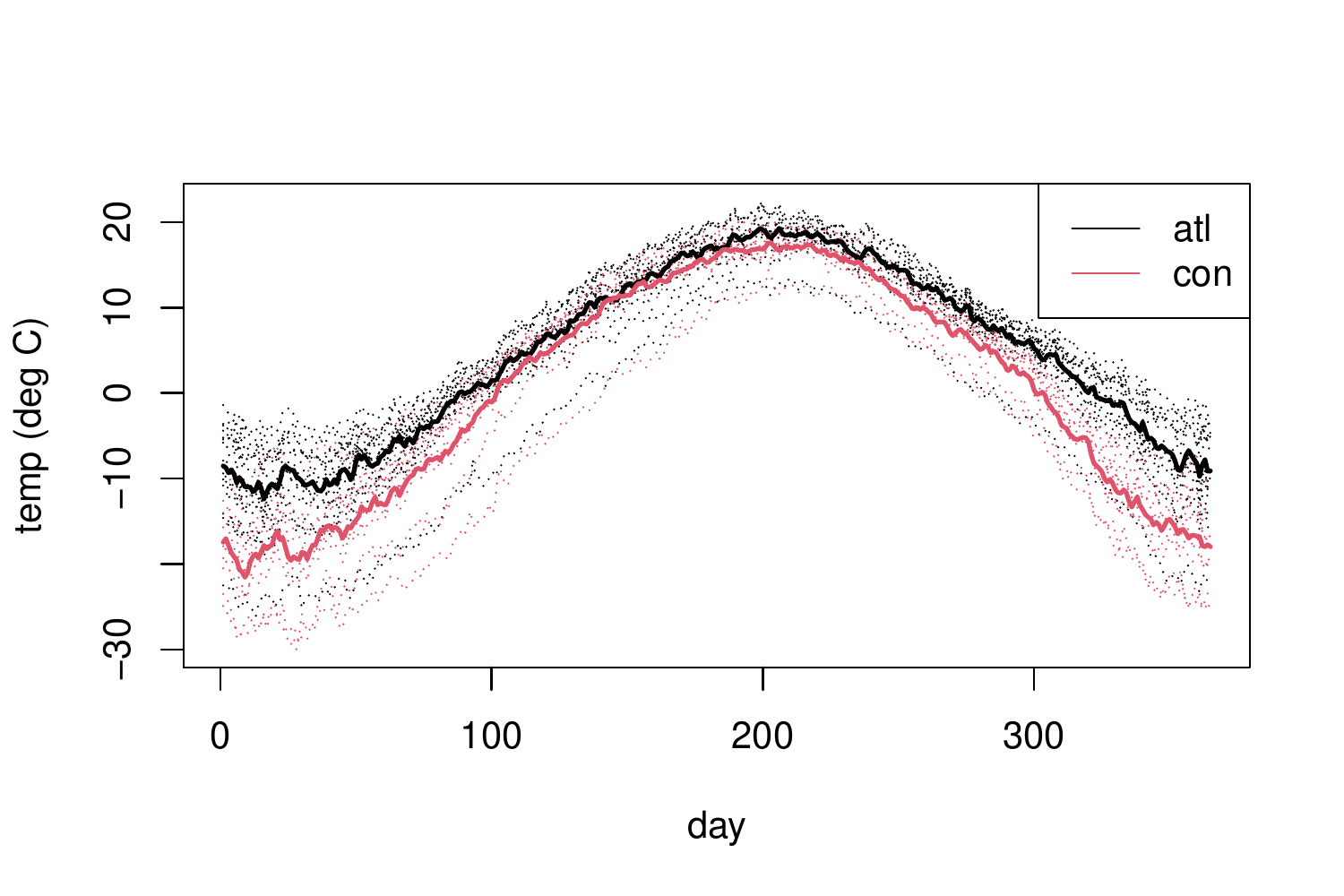}
	\includegraphics[width=0.5\textwidth]{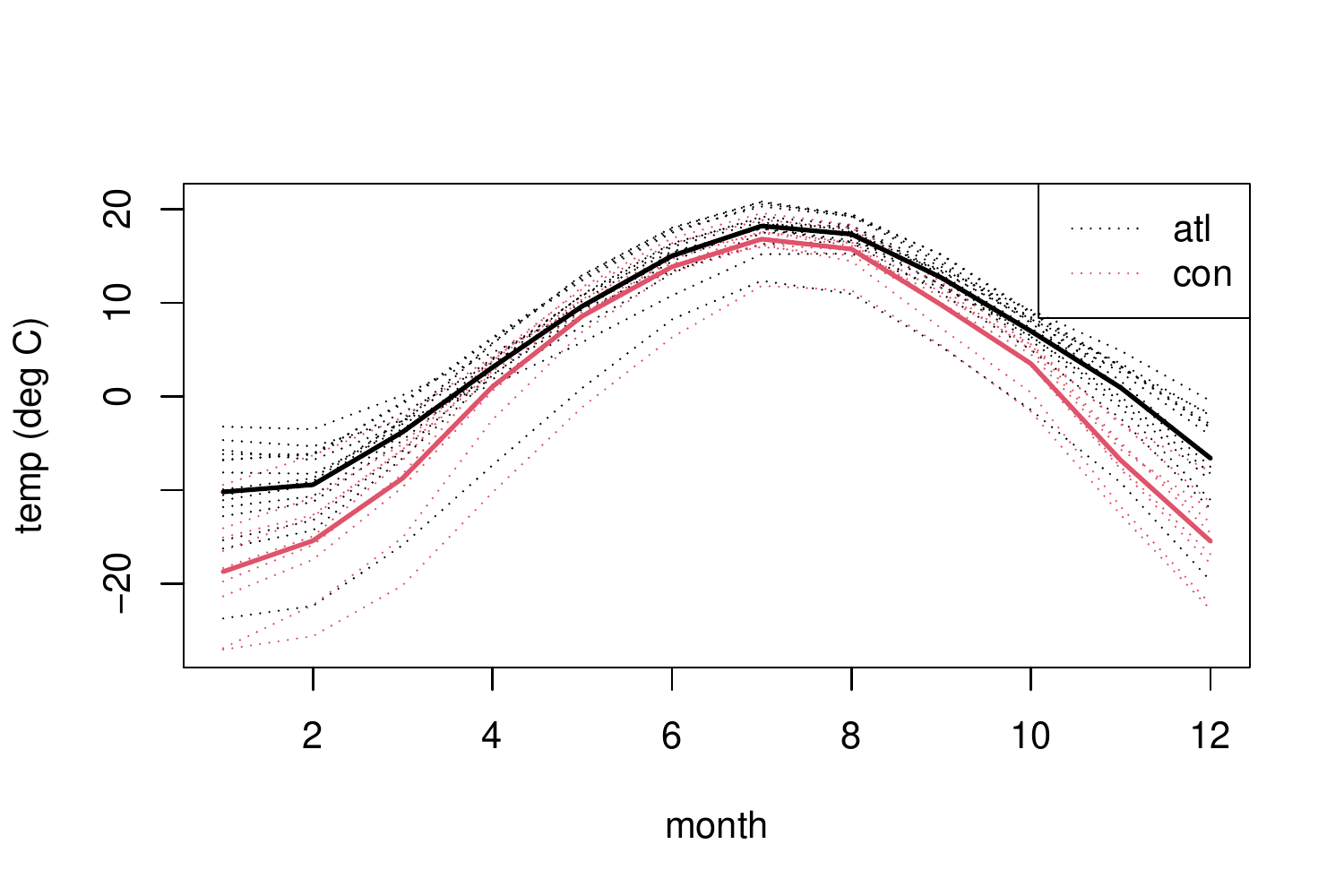}
	\caption{Temperature profiles of 24 Canadian weather stations. Left: Daily averages. Right: Monthly averages. Bold lines indicate group means.} \label{fig-cdata}
\end{figure}

Our parameter of interest is the difference in means,
$$
\theta_i = \mu^i_\text{atlantic} - \mu^i_\text{continental}, \quad i = 1, \dots, 12
$$
where $i$ corresponds to the $i$'th month of the year. 

There is a clear correlation in data as well as heteroscedastic variation, which make parametric methods less applicable. We applied the presented methodology using the two-sample test of Example \ref{eks-two-sample}. We used $M = 10000$ permutations. 

\paragraph{Results}
\begin{figure}[!htb]
	\centering
	\includegraphics[width=0.65\textwidth]{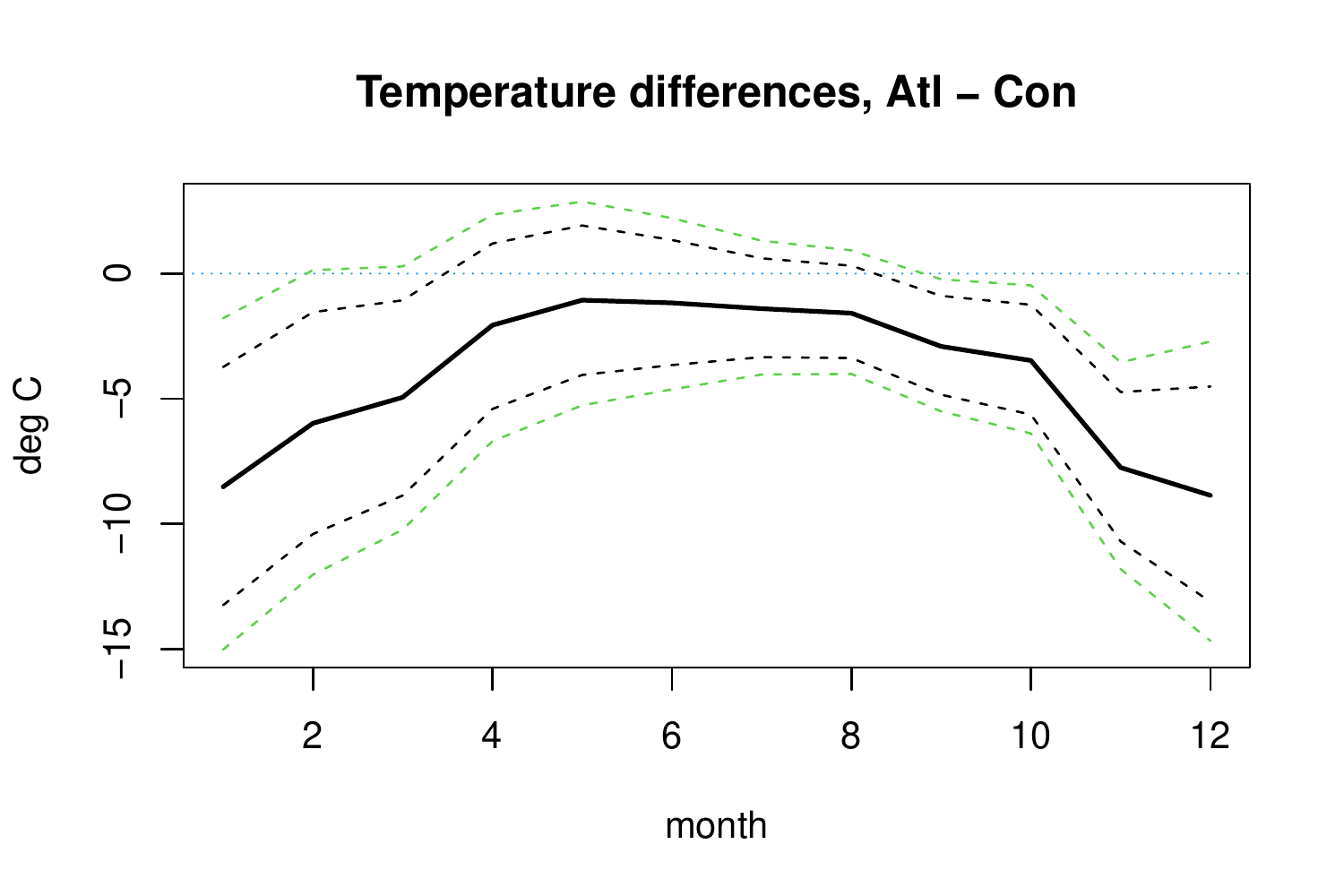}
	\caption{Mean monthly differences and confidence bands using unadjusted (black) and adjusted (green) confidence intervals. Blue dotted line indicates zero (ie. no difference between the two groups).}
\end{figure}

The coverage for the unadjusted 95\% confidence intervals was found to be $79\%$. For comparison, the coverages under the assumptions of the Sidak and Bonferroni procedures would have been $54\%$ and $40\%$, respectively. 
The adjusted confidence region (adjusted so that the coverage is $95\%$) had marginal coverage $99.1\%$, ie. $\alpha^* = 0.009$. 

\subsection{Simulation: Linear regression with strongly correlated outcomes}

We perform a small simulation experiment using a multivariate linear regression with correlated errors. 
Our regressor values $ x = x_1, \dots, x_{20}$ are generated uniformly from $(-1, 1)$; these are fixed for the entirety of the simulation. 

The statistical model is 
$$ Y_i = \alpha + \beta x_i + \epsilon_i, \quad i = 1, \dots 20 $$
with unknown $\alpha, \beta \in \R^8$.

We generate data according to: 
$$
\begin{gathered}
\alpha_1, \dots, \alpha_8 = 0 \\
\beta_1, \dots, \beta_8 = 1 \\
\epsilon_i \sim N(0, D), \quad
[D]_{kl} = \begin{cases}
1 & k = l \\
\rho & k \neq l
\end{cases}
\end{gathered} $$
using $\rho \in \{0.90, 0.95, 0.99\}$.

We inferred confidence intervals for $\beta_1, \dots, \beta_8$, by applying the presented methodology using the test of Example \ref{eks-lin-reg-test}. We used $M = 1000$ permutations for each simulation run, and used 100 simulation runs for each value of $\rho$. We used a threshold of $1/640 \approx 0.0016$ in the calculation of $\alpha^*$. 

\paragraph{Results}
Estimates of $\beta_1, \dots, \beta_8$ are given by the ordinary least squares estimates, and thus their distributions follow the classical theory, ie. $\hat{\beta}_k \sim N(\beta_k, 1/{\sum_{i=1}^N (x_i - \bar{x})^2})$.

Our focus is on the joint coverage of the confidence intervals. We report the mean and inter-quartile range (IQR) of the coverage $\alpha_\text{multiple}$  at $\alpha = 0.05$ (ie. 95\% confidence intervals) and the adjusted confidence level $\alpha^*$ for $\alpha = 0.05$.
\begin{table}[!htb]
	\centering
\begin{tabular}{c|c|c|c|c}
	$\rho$ & mean $\alpha_\text{multiple}$ & IQR $\alpha_\text{multiple}$ & mean $ \alpha^* $ & IQR   $ \alpha^* $ \\ \hline
	0.90 & 0.174 & 0.024 & 0.011 & 0.002 \\
	0.95 & 0.144 & 0.018 & 0.014 & 0.002 \\
	0.99 & 0.114 & 0.009 & 0.018 & 0.003
\end{tabular}
\caption{Coverage and adjusted confidence levels for the simulation} \label{tabel-sim-1}
\end{table}

Results are displayed in Table \ref{tabel-sim-1}. As expected, $\alpha_\text{multiple}$ decreases with increased correlation, and $\alpha^*$ increases correspondingly.

\section{Discussion}
In this paper we have demonstrated a new method for constructing confidence intervals.
We have presented this method in a fairly restricted setting in terms of modelling (the presented examples are linear regression and two-sample comparison), but as permutation tests (including rank tests) have a broader scope, we have strong reason to believe that our methodology extends to these cases as well. 
Secondly, we devised a multiple testing correction procedure, that can handle arbitrary dependencies in the test statistics. We would like to stress the easy implementation and relative speed of the procedure. 

Some readers might argue against using terms "confidence region" and "confidence level" for the multivariate procedure described in Section \ref{sect-multi-par}. However, according to the frequentist interpretation of statistics, the true value $\theta$ is either within or outside the confidence interval, and the probability statement $P(\theta \in \text{CI})$ is understood as the long-term frequency when repeating the experiment. 
Contrary, single realisations of confidence intervals are better understood in terms of controlling the type I risk, and our method falls within this category. Additionally, the univariate method of Section \ref{sect-single-par} also carries the "repeated experiments" interpretation.

Our paper was inspired by the challenge of finding confidence bands for high-dimensional data including functional data. Due to the factor of $2^K$ corners when calculating $\alpha_\text{multiple}$ we have not been able to reach large $K$. We hope that future research can solve this issue and devise a non-parametric method that scales easily to any dimension. 

\section{Software (R package)} An implementation of the procedure is available from GitHub as an R package \url{https://github.com/naolsen/ciperm}.

\section*{Acknowledgements}
I am grateful to Professor Bo Friis Nielsen, Technical University of Denmark, for inputs and comments to the manuscript.

\appendix
\section{Negligible permutations for linear regression}
Using the notation of Example \ref{eks-lin-reg-test}, we show that a permutation $s \in S_N$ is negligible iff $s(x) - \bar{x} = \pm (x - \bar{x})$. 
Let $s \in S_N$ be given, and define $f, g: \R \pil \R_+$:
\begin{align*}
f(\theta) &= t(Y_1 - \phi_1 (\theta), \dots, Y_N - \phi_N (\theta)) = \left|\sum_{i=1}^N (x_i - \bar{x}) (Y_i - \bar{Y} -  \theta (x_i - \bar{x})) \right| \\
g(\theta) &= t \circ s (Y_1 - \phi_1 (\theta), \dots, Y_N - \phi_N (\theta)) = \left| \sum_{i=1}^N (x_i - \bar{x}) (Y_{s(i)} - \bar{Y} - \theta (x_{s(i)} - x)) \right|
\end{align*}
We must verify if properties (2) and (3) of the test statistic are satisfied for $s$.
Since both $f$ and $g$ are linear functions, it suffices to consider their derivatives; ie. $s$ is non-negligible iff $|f'(\theta)| > |g'(\theta)|$. We have
\begin{align*}
|f'(\theta)| &= \left|\sum_{i=1}^N  (x_i - \bar{x}) (x_i - \bar{x}) \right| =
\sum_{i=1}^N  (x_i - \bar{x})^2\\
|g'(\theta)| &= \left|\sum_{i=1}^N  (x_i - \bar{x})  (x_{s(i)} - \bar{x}) \right| 
\end{align*}
If we let $\brak{\cdot, \cdot}$ denote the standard inner product on $\R^N$, then it follows by  the Cauchy-Schwartz inequality that 
\begin{equation}
|g'(\theta)| = | \brak{  s(x) - \bar{x}, x-\bar{x}} |  \leq ||x- \bar{x}|| \cdot ||s(x) - \bar{x} || = ||x- \bar{x}||^2 = |f'(\theta)|
\label{cs-ligning}
\end{equation}
Here we have used that $||s(x) - \bar{x} || = || x - \bar{x} ||$.  We have equality in \eqref{cs-ligning} iff $s(x) - \bar{x}$ and $x - \bar{x}$ are linearly dependent, which is true for  $s(x) - \bar{x} = \pm (x - \bar{x})$.

\end{document}